\documentclass{elsarticle}
\usepackage[T1]{fontenc}
\usepackage[latin9]{inputenc}
\usepackage{url}
\usepackage{amsmath}
\usepackage{amsthm}
\usepackage{amssymb}
\usepackage{setspace}
\makeatletter

\providecommand{\tabularnewline}{\\}

\theoremstyle{plain}
\newtheorem{thm}{\protect\theoremname}
\theoremstyle{remark}
\newtheorem{rem}[thm]{\protect\remarkname}
\theoremstyle{definition}
\newtheorem{example}[thm]{\protect\examplename}
\theoremstyle{plain}
\newtheorem{prop}[thm]{\protect\propositionname}
\theoremstyle{plain}
\newtheorem{cor}[thm]{\protect\corollaryname}

\makeatother

\providecommand{\corollaryname}{Corollary}
\providecommand{\examplename}{Example}
\providecommand{\propositionname}{Proposition}
\providecommand{\remarkname}{Remark}
\providecommand{\theoremname}{Theorem}

\begin{document}

\begin{frontmatter}{}

\title{Testing equivalence to binary generalized linear models with application
to logistic regression }
\begin{abstract}
We introduce a new equivalence test to show sufficiently good agreement
of observed data with a binary generalized linear model (GLM). The
test statistic is constructed via the minimum distance method. The
test is developed for the important special case where all covariates
are categorical. The critical values can be calculated using an asymptotic
approximation or by means of bootstrapping. The application of the
test to logistic regression is illustrated on two real data sets.
The finite sample performance of the proposed test is studied by simulations
which are based on these two data sets.
\end{abstract}

\author{Vladimir Ostrovski}

\address{ERGO Group AG, ERGO-Platz 1, 40477 Düsseldorf}

\ead{vladimir\_ostrovski@web.de}

\end{frontmatter}{}

\section{Introduction}

Binary GLMs are widely used in social sciences, marketing, biology,
medical sciences, financial and insurance industry to name a few.
Let $Y$ denote a binary response variable and $X$ be a vector of
covariates. Let $X_{0}=1$ to simplify notation and let $d$ be the
dimension of $X$. A binary GLM states that conditional probability
$P\left(\left.Y=1\right|X=x\right)$ can be approximated by $q_{x}(\beta)=g^{-1}\left(x^{t}\beta\right)$,
where $x\in\mathbb{R}^{d}$ is a value of $X$, $\beta\in\mathbb{R}^{d}$
is a vector of parameters and $g:\left(0,1\right)\rightarrow\left(-\infty,+\infty\right)$
is a bijective function. The function $g$ is called the link function.
The most popular link function is the logit function $g\left(s\right)=\ln\left(\frac{x}{1-x}\right)$,
which corresponds to logistic regression. Another common link function
is the CDF of the standard normal distribution, which corresponds
to probit regression.

A binary GLM can be always fitted to observed data. Therefore, there
is a need to assess the appropriateness of the fitted model, see \citet{Pregibon1981}.
There exists an extensive literature on the goodness of fit tests
for the logistic regression, which are traditionally applied to assess
model quality, see \citet{Hosmer1991}, \citet{HOSMER1997} and \citet{Hosmer2002}.
However, the goodness of fit tests are tailored to establish lack
of fit to observed data. Testing equivalence is an appropriate method
to show sufficiently good agreement of observed data with a binary
GLM. To our best knowledge, there are not any equivalence tests for
the binary GLMs. We restrict ourselves to the case where all covariates
are categorical. This situation is very common in various research
areas. If the categories are not ordered then the category values
should be one-hot encoded. If there is a natural ordinal relationship,
the category values can be assigned corresponding numbers. The observations
can be represented as a multi-way contingency table, where each cell
corresponds to a unique value of covariates.
\begin{rem}
The goodness of fit of a binary GLM is not important in classification
problems, because it is sufficient to be on the correct side of the
decision boundary. Otherwise, binary GLMs are often used to analyse
observed data and to understand the effects of covariates on the outcome.
There are many applications where the modeled conditional probabilities
$\left(q_{x}\left(\beta\right)\right)_{x\in C}$ are used directly
for further processing, see two examples below. A goodness of fit
measure can also be applied to compare different models.
\end{rem}

\begin{example}
The default probability of customer loans is often used for pricing
and risk management. The default probability can be modeled by binary
GLMs based on employment status, gender, marital status, residence
area and other available information. Sufficiently exact modeling
is essential for the correct pricing of default risk.
\end{example}

\begin{example}
The motor vehicle insurance is usually contracted for a period of
one year. The probability of an insurance claim can be modeled by
binary GLMs based on vehicle type, driver experience, residential
area, vehicle use, parking and other available information. The probability
of an insurance claim can be applied for pricing and cost forecasting.
\end{example}

The proposed equivalence test is based on the minimum distance estimation
(MDE) which is also well known to be robust to atypical observations
and model miss-specification, see \citet{Donoho1988}. \citet{Bondell2005}
uses the Cramer-von Mises distance to estimate a logistic regression
model. \citet{Chi2014} proposed to minimize the Euclidean distance
between the observed counting frequencies and corresponding model
probabilities. They derived the L2E loss function, which can be applied
to both continuous and categorical covariates. However, the L2E loss
function is not a distance between observed data and a model. 

Let $C$ be the set of all values of $X$. Let $p_{x}$ denote $P\left(\left.Y=1\right|X=x\right)$
and let $p=\left(p_{x}\right)_{x\in C}$. We consider the minimum
Euclidean distance $d\left(p,\mathcal{M}\right)=\inf_{\beta\in\mathbb{R}^{d}}l_{2}\left(p,q\left(\beta\right)\right)$
between a family of binary GLMs $\mathcal{M}$ and the vector of underlying
conditional probabilities $p$, where $l_{2}$ is the Euclidean distance
and $q\left(\beta\right)=\left(q_{x}(\beta)\right)_{x\in C}$. The
family of binary GLMs $\mathcal{M}$ is specified by the vector of
covariates $X$ and the link function $g$. The Euclidean distance
represents a reasonable trade off between efficiency and robustness,
see \citet{Basu1998}. The equivalence test problem is defined by
$H_{0}=\left\{ d\left(p_{n},\mathcal{M}\right)\geq\varepsilon\right\} $
and $H_{1}=\left\{ d\left(p_{n},\mathcal{M}\right)<\varepsilon\right\} $,
where $\varepsilon>0$ is a tolerance parameter. If $H_{0}$ can be
rejected for a sufficiently small value of $\varepsilon$ then the
corresponding binary GLM is sufficiently close to observed data.

Let $n_{x}$ denote the number of observations under the condition
$X=x$ and let $n=\sum_{x\in C}n_{x}$. We assume that $n_{x}>0$
for any $x\in C$. Let $p_{n,x}$ denote the observed counting frequency
for $X=x$ and let $p_{n}=\left(p_{n,x}\right)_{x\in C}$. The vector
$p_{n}$ is a plug-in estimator of the unknown true conditional probabilities
$p$. If there exists $\beta_{n}\in\mathbb{R}^{d}$ such that $d\left(p_{n},\mathcal{M}\right)=l_{2}\left(p_{n},q\left(\beta_{n}\right)\right)$
then $\beta_{n}$ is called the minimum distance estimator of the
parameter $\beta$. The test statistic for the equivalence test problem
is given by $T\left(p_{n}\right)=\sqrt{n}\left(d^{2}\left(p_{n},\mathcal{M}\right)-\varepsilon^{2}\right)$.
There is no closed formula for the calculation of $T\left(p_{n}\right)$,
so it should be computed numerically.
\begin{rem}
The proposed approach can be extended to the multinomial outcome.
Let $Y\in\left\{ 0,\ldots,k-1\right\} $ and $k>2$ be the number
of the possible outcomes. The conditional distribution of $Y$ given
$X=x$ is multinomial and can be represented by a probability vector
$p_{x}\in\left[0,1\right]^{k}$. A multinomial GLM approximates $p_{x}$
by $q_{x}(\beta)=g^{-1}\left(x\beta\right)$, where $g:\left(0,1\right)\rightarrow\left(-\infty,+\infty\right)$
is a link function, $x\in\mathbb{R}^{d}$ is a value of $X$, $\beta\in\mathbb{R}^{d\times k}$
is a matrix of parameters. The function $g^{-1}$ should be applied
element-wise to the vector $x\beta\in\mathbb{R}^{k}$. 

Let $d$ be some differentiable distance on $\mathbb{R}^{k}$, for
example the Euclidean distance. Let $\overline{d}\left(p,q\left(\beta\right)\right)$
denote the vector $\left(d\left(p_{x},q_{x}\left(\beta\right)\right)\right)_{x\in C}$.
The minimum Euclidean distance between a family of multinomial GLMs
$\mathcal{M}$ and the vector $p=\left(p_{x}\right)_{x\in C}$ of
the multinomial distributions can be defined as $d\left(p,\mathcal{M}\right)=\inf_{\beta}l_{2}\left(\overline{d}\left(p,q\left(\beta\right)\right)\right)$,
where $\beta\in\mathbb{R}^{d\times k}$. The equivalence test problem
and the test statistic are then defined exactly as in the binomial
case. The asymptotic and bootstrap based tests can be derived following
the lines of Section \ref{sec_equivalence_test}.
\end{rem}

\section{Asymptotic distribution and equivalence test\label{sec_equivalence_test}}

First, we derive the asymptotic distribution of the test statistic
$T_{n}$.
\begin{prop}
Let $p_{0}$ be a fixed vector of conditional probabilities so that
$d\left(p_{0},\mathcal{M}\right)=\varepsilon$. Assume that there
exists a continuous function $h:p\mapsto\beta$ on an open neighborhood
$U$ of $p_{0}$ such that $d\left(p,\mathcal{M}\right)=l_{2}\left(p,q\left(h\left(p\right)\right)\right)$
for all $p\in U$. Assume for each $x\in C$ that $\frac{n_{x}}{n}\rightarrow w_{x}$
for $n\rightarrow\infty$ and $w_{x}\in\left(0,1\right)$. Then the
test statistic $T\left(p_{n}\right)$ under $p_{0}$ converges weakly
to $\sum_{x\in C}\frac{2}{\sqrt{w_{x}}}\left(p_{0,x}-q_{x}\left(\beta_{0}\right)\right)Z_{x}$,
where $Z_{x}$ are independently distributed random variables and
$\beta_{0}=h\left(p_{0}\right)$. Each $Z_{x}$ is Gaussian with mean
zero and variance $p_{0,x}\left(1-p_{0,x}\right)$. 
\end{prop}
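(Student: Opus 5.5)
The plan is to use the delta method for the functional $F(p)=d^{2}(p,\mathcal{M})$ at $p_{0}$, combined with the multivariate CLT for $p_{n}$. Since $d(p_0,\mathcal{M})=\varepsilon$, we have
\[
T(p_{n})=\sqrt{n}\left(F(p_{n})-F(p_{0})\right).
\]
For each $x\in C$, $p_{n,x}$ is the mean of $n_{x}$ i.i.d. Bernoulli$(p_{0,x})$ variables, and the cells are independent. The CLT together with $n_{x}/n\to w_{x}$ therefore gives
\[
\sqrt{n}\,(p_{n,x}-p_{0,x})=\sqrt{n/n_{x}}\,\sqrt{n_{x}}\,(p_{n,x}-p_{0,x})\Rightarrow Z_{x}/\sqrt{w_{x}},
\]
jointly over $x\in C$, with independent $Z_{x}\sim N(0,p_{0,x}(1-p_{0,x}))$. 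It then suffices to show that $F$ is differentiable at $p_{0}$ with gradient $2\left(p_{0}-q(\beta_{0})\right)$. The delta method then yields the limit $\sum_{x}\frac{2}{\sqrt{w_x}}(p_{0,x}-q_{x}(\beta_{0}))Z_{x}$.

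The key step is differentiability of $F$, an envelope-type argument, which I would prove directly by a two-sided bound. Write $F(p)=\inf_{\beta}\|p-q(\beta)\|^{2}$.

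For the upper bound, plug in the fixed $\beta_{0}$:
\[
F(p)\le\|p-q(\beta_{0})\|^{2}=F(p_{0})+2\langle p-p_{0},\,p_{0}-q(\beta_{0})\rangle+\|p-p_{0}\|^{2}.
\]

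For the lower bound, take $p\in U$ and use the minimizer $h(p)$:
\[
F(p_{0})\le\|p_{0}-q(h(p))\|^{2}=F(p)-2\langle p-p_{0},\,p-q(h(p))\rangle+\|p-p_{0}\|^{2}.
\]
Rearranging gives
\[
F(p)\ge F(p_{0})+2\langle p-p_{0},\,p-q(h(p))\rangle-\|p-p_{0}\|^{2}.
\]

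Continuity of $h$ at $p_{0}$ and continuity of $\beta\mapsto q(\beta)$ (because $g^{-1}$ is continuous) give $p-q(h(p))\to p_{0}-q(\beta_{0})$ as $p\to p_{0}$. Combining the two bounds,
\[
F(p)-F(p_{0})-2\langle p-p_{0},\,p_{0}-q(\beta_{0})\rangle=o(\|p-p_{0}\|),
\]
which is Fréchet differentiability at $p_{0}$ with the claimed gradient.

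Finally, I apply the delta method. Since $p_{n}\to p_{0}$ in probability, $p_{n}\in U$ with probability tending to one, so the expansion applies. Multiplying by $\sqrt{n}$, the remainder is $o_{P}(1)$ because $\sqrt{n}\|p_{n}-p_{0}\|=O_{P}(1)$. Slutsky's lemma and the continuous mapping theorem applied to the linear functional then give the stated limit.

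The main obstacle is the differentiability step: $h$ is only assumed continuous, not differentiable, so the implicit function theorem cannot be used. The sandwich argument above is designed to avoid this and needs only continuity of $h$ at $p_{0}$. The rest is routine. One caveat is that the limit is degenerate if $p_{0}=q(\beta_{0})$, but this is excluded since $\varepsilon>0$.
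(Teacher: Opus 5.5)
Your proof is correct and has the same skeleton as the paper's: a cellwise CLT, independence plus Slutsky for the joint limit of $\sqrt{n}(p_n-p_0)$, differentiability of $F(p)=d^{2}(p,\mathcal{M})$ at $p_0$ with gradient $2(p_0-q(\beta_0))$, and the delta method.

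The real difference is the differentiability step. The paper does not prove it; it imports it as Theorem 1 of Ostrovski (2018). You derive it directly from the two-sided envelope bound, and both inequalities are right:
\begin{itemize}
\item the upper bound uses only $F(p_0)=\|p_0-q(\beta_0)\|^{2}$;
\item the lower bound uses $F(p)=\|p-q(h(p))\|^{2}$ for $p\in U$.
\end{itemize}
So the remainder is squeezed between $-2\|p-p_0\|\,\|(p-p_0)-(q(h(p))-q(\beta_0))\|-\|p-p_0\|^{2}$ and $\|p-p_0\|^{2}$. Both bounds are $o(\|p-p_0\|)$.

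The citation buys brevity. Your argument buys self-containedness, and it makes explicit which hypotheses are actually used:
\begin{itemize}
\item continuity of $h$ at the single point $p_0$;
\item continuity of $\beta\mapsto q(\beta)$, i.e.\ of $g^{-1}$.
\end{itemize}
The second property holds for logit and probit. However, it does not follow from the paper's description of $g$ as merely bijective, so you should list it as an assumption rather than assert it as a fact.

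Your closing caveat about degeneracy is harmless but unnecessary: the delta method delivers the displayed limit whether or not it is degenerate.
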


\begin{proof}
The central limit theorem implies the weak convergence $\sqrt{n_{x}}\left(p_{n,x}-p_{0,x}\right)\rightarrow Z_{x}$.
The random variables $\left(p_{n,x}\right)_{x\in C}$ are independent
and the set $C$ is finite so that $\sqrt{n}\left(p_{n}-p_{0}\right)=\left(\sqrt{\frac{n}{n_{x}}}\sqrt{n_{x}}\left(p_{n,x}-p_{0,x}\right)\right)_{x\in C}\rightarrow\left(\frac{Z_{x}}{\sqrt{w_{x}}}\right)_{x\in C}$
for $n\rightarrow\infty$ by Slutzki's lemma, see \citet[p. 11, Lemma 2.8]{Vaart1998}.
The function $p\mapsto d^{2}\left(p,\mathcal{M}\right)$ is differentiable
at $p_{0}$ with the derivative $\left(2\left(p_{0,x}-q_{x}\left(\beta_{0}\right)\right)\right)_{x\in C}$
by \citet[Theorem 1]{Ostrovski2018}. The assertion follows by the
delta method, see \citet[p. 26, Theorem 3.1]{Vaart1998}.
\end{proof}
\begin{rem}
The assumption, that a continuous minimizer $h$ exists on an open
neighborhood of $p_{0}$, is essential for the numerical calculation
of $d\left(p_{0},\mathcal{M}\right)$ and also for the differentiability
of $p\mapsto d^{2}\left(p,\mathcal{M}\right)$ at $p_{0}$. For the
rest of the paper, we assume that this requirement is fulfilled. This
assumption can be validated numerically by using different starting
points for the optimization.
\end{rem}

\begin{cor}
The asymptotic distribution of $T\left(p_{n}\right)$ under $p_{0}$
is Gaussian with mean zero and variance: 
\begin{eqnarray}
\sigma^{2}\left(p_{0}\right) & = & 4\sum_{x\in C}\frac{1}{w_{x}}p_{0,x}\left(1-p_{0,x}\right)\left(p_{0,x}-q_{x}\left(\beta_{0}\right)\right)^{2}\label{eq_variance}
\end{eqnarray}
\end{cor}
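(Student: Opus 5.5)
The corollary follows directly from the preceding Proposition: the limit there is a fixed finite linear combination of independent centered Gaussians, so I only need to identify that this combination is Gaussian and compute its variance.

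First, I apply the Proposition, which gives the weak convergence of $T(p_n)$ under $p_0$ to
\[
L=\sum_{x\in C}a_{x}Z_{x},\qquad a_{x}=\frac{2}{\sqrt{w_{x}}}\left(p_{0,x}-q_{x}\left(\beta_{0}\right)\right),
\]
with $Z_{x}$ independent and $Z_{x}\sim N\left(0,p_{0,x}\left(1-p_{0,x}\right)\right)$. The coefficients $a_{x}$ are deterministic, since $p_{0}$, $\beta_{0}=h(p_{0})$ and the weights $w_{x}$ are fixed.

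Next, I note that $(Z_{x})_{x\in C}$ is a finite vector of independent Gaussians, hence jointly Gaussian. Any linear form in a jointly Gaussian vector is Gaussian. If a one-line justification is wanted, the characteristic function factorizes by independence:
\[
E e^{itL}=\prod_{x}\exp\left(-\tfrac{1}{2}t^{2}a_{x}^{2}p_{0,x}(1-p_{0,x})\right),
\]
which is the characteristic function of a centered normal law. Its mean is $\sum_x a_x E Z_x=0$. By independence its variance is
\[
\sum_x a_x^2\operatorname{Var}Z_x=\sum_x \frac{4}{w_x}\left(p_{0,x}-q_x(\beta_0)\right)^2 p_{0,x}(1-p_{0,x}),
\]
which is exactly the expression in (\ref{eq_variance}).

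There is no substantive obstacle. The only care point is the degenerate case $\sigma^{2}(p_{0})=0$. This occurs, for instance, if $p_{0,x}\in\{0,1\}$ or $p_{0,x}=q_{x}(\beta_{0})$ for every $x$. The latter is impossible here, because $d(p_{0},\mathcal{M})=\varepsilon>0$ forces some $p_{0,x}\neq q_{x}(\beta_{0})$. In any degenerate case the limit should be read as the point mass at $0$, i.e. $N(0,0)$. With that convention the statement holds in full generality.
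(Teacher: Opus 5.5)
Your proof is correct and follows essentially the same route as the paper, which gives no separate proof and reads the corollary directly off the Proposition. The limit is a deterministic linear combination of independent centered Gaussians, hence centered Gaussian with variance $\sum_{x} a_x^2\, p_{0,x}(1-p_{0,x})$, which is exactly $\sigma^2(p_0)$. Your remark on the degenerate case $\sigma^2(p_0)=0$ (limit read as the point mass at $0$) is a harmless clarification that the paper does not make.
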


The variance $\sigma^{2}\left(p_{0}\right)$ is a continuous function
of $p_{0}$. Therefore $\sigma^{2}\left(p_{n}\right)$ is a consistent
estimator of $\sigma\left(p_{0}\right)$ by the continuous mapping
theorem. The asymptotic test rejects $H_{0}$ if $T\left(p_{n}\right)\leq c_{\alpha}\sigma\left(p_{n}\right)$,
where $c_{\alpha}$ denotes the lower $\alpha$-quantile of the standard
normal distribution. The minimum tolerance parameter $\varepsilon_{\min}$,
for which the asymptotic test can reject $H_{0}$, equals 
\begin{equation}
\varepsilon_{\min}=\sqrt{d^{2}\left(p_{n},\mathcal{M}\right)-n^{-\frac{1}{2}}c_{\alpha}\sigma\left(p_{n}\right)}\label{min_eps}
\end{equation}
The asymptotic test can be carried out as follows:
\begin{enumerate}
\item Given are the counting frequencies $p_{n}$, the number of observations
$n_{x}$ for all $x\in C$, the tolerance parameter $\varepsilon$
and the significance level $\alpha$.
\item Compute the minimum distance estimator $\beta_{n}$ using some optimization
method.
\item Calculate the minimum distance $d\left(p_{n},\mathcal{M}\right)=l_{2}\left(p_{n},q\left(\beta_{n}\right)\right)$
and the test statistic $T\left(p_{n}\right)=\sqrt{n}\left(d^{2}\left(p_{n},\mathcal{M}\right)-\varepsilon^{2}\right).$
\item Set $w_{x}=\frac{n_{x}}{n}$ for all $x\in C$ and calculate the asymptotic
variance $\sigma^{2}\left(p_{n}\right)$, see (\ref{eq_variance}).
\item Compute the minimum tolerance parameter $\varepsilon_{\min}$, for
which the asymptotic test can reject $H_{0}$, see (\ref{min_eps}).
\item Reject $H_{0}$ if $\varepsilon_{\min}\leq\varepsilon$.
\end{enumerate}
\begin{rem}
It can be shown that the asymptotic test is locally asymptotically
most powerful, see \citet[Proposition 3]{Ostrovski2016} for the proof.
\end{rem}

The variance of the test statistic may be consistently estimated by
the bootstrap method, see \citet[Section 6]{EfronTibshirani1993}
for details. The asymptotic test, in which the variance of the test
statistic is estimated by bootstrapping, will be referred to as the
asymptotic BV test in the remainder of this paper.

In order to improve the finite sample performance of the equivalence
test, we apply the bootstrap-t method, see \citet[Chapter 23]{Vaart1998}.
The bootstrap samples $\hat{p}_{n,x}$ should be generated from the
binomial distribution $B\left(n_{x},p_{n,x}\right)$ for each $x\in C$.
Let $\hat{c}_{\alpha}$ denote the empirical lower $\alpha$-quantile
of the ``studentised'' test statistic $\frac{\sqrt{n}}{\sigma\left(\hat{p}_{n}\right)}\left(T\left(\hat{p}_{n}\right)-T\left(p_{n}\right)\right)$,
where $\hat{p}_{n}=\left(\hat{p}_{n,x}\right)_{x\in C}$. The empirical
quantile $\hat{c}_{\alpha}$ can be computed by the Monte Carlo method
to any degree of accuracy. The bootstrap-t test rejects $H_{0}$ if
$T\left(p_{n}\right)\leq\sigma\left(p_{n}\right)\hat{c}_{\alpha}$.
The minimum tolerance parameter $\varepsilon_{\min}$, for which the
bootstrap-t test can reject $H_{0}$, is $\sqrt{d^{2}\left(p_{n},\mathcal{M}\right)-n^{-\frac{1}{2}}\hat{c}_{\alpha}\sigma\left(p_{n}\right)}$.
The bootstrap-t test is consistent by \citet[p. 330, Theorem 23.4 and p. 331, Theorem 23.5]{Vaart1998}.

\section{Simulation study for logistic regression}

The simulation study is performed for the logistic regression model
because it is the most common type of the binary GLMs. The proposed
equivalence tests are implemented in R and are freely available under
\url{https://github.com/TestingEquivalence/MDLogisticRegressionR}.
All simulations are performed in R-Studio on a scientific workstation.
The simulation study is based on two real data sets: Fiji fertility
survey (Fiji) from \citet{Little1978} and survival of passengers
on the Titanic (Titanic) from \citet{Dawson1995}. Both data sets
are included in the source code. All categories are one-hot encoded. 

The Fiji data set contains the distribution of 1607 interviewed women,
classified by current age (under 25, 25-29, 30-39, 40-49), level of
education (low, high), desire for more children (yes, no) and contraceptive
use (yes, no). In our analysis, contraceptive use is the response
variable. Age, level of education and desire for more children are
covariates. 

The Titanic data set provides information on the fate of 2201 passengers
on the fatal voyage of the ocean liner \textquoteleft Titanic\textquoteright ,
summarized according to class (1st, 2nd, 3rd, Crew), sex (male, female),
age (child, adult) and survival (yes, no). We consider survival as
a response variable. The covariate age is omitted because there are
no children in 1st class, 2nd class and crew. The covariates class
and sex are used for the logistic regression.

The minimum number of observations $\min_{x\in C}n_{x}$ should be
sufficiently large so that the observed counting frequencies $p_{n,x}$
have appropriate quality for all $x\in C$. The considered data sets
meet this requirement because $\min_{x\in C}n_{x}=14$ for the Fiji
data set and $\min_{x\in C}n_{x}=23$ for the Titanic data set. 

\subsection{Estimation of regression coefficients}

MLE and MDE of regression coefficients are compared in Table \ref{tab_reg_coef}.
Additionally, the Euclidean distances $l_{2}\left(p_{n},q\left(\beta\right)\right)$
between observed data and the corresponding logistic regression models
are shown. 

In case of the Fiji data set, the distances $l_{2}\left(p_{n},q\left(\beta\right)\right)$
for MLE and MDE are close to each other. MLE and MDE of the regression
coefficients are also quite similar. MDE of most coefficients have
considerably higher standard deviations than MLE. The standard deviation
of the distance $l_{2}\left(p_{n},q\left(\beta\right)\right)$ is
almost the same for MDE and MLE, so that the logistic regression model
estimated by MDE is not more volatile overall.

In case of the Titanic data set, the model estimated by MDE is substantially
closer to the observed data compared to the model estimated by MLE.
Some regression coefficients differ significantly depending on the
estimation method. The standard deviations of MDE regression coefficients
are higher than those of MLE. The standard deviation of the distance
$l_{2}\left(p_{n},q\left(\beta\right)\right)$ is significantly lower
for MDE compared to MLE. Hence, the logistic regression model estimated
by MDE can be considered much more robust.

\begin{table}[h]
\begin{tabular}{cccccc}
\hline 
Data set & Parameter & MLE & MDE & SD MLE & SD MDE\tabularnewline
\hline 
Fiji & $l_{2}\left(p_{n},q\left(\beta\right)\right)$  & 0.32 & 0.30 & 0.066 & 0.066\tabularnewline
 & intercept  & -0.81 & -0.96 & 0.16 & 0.27\tabularnewline
 & wants more children yes  & -0.83 & -0.95 & 0.12 & 0.17\tabularnewline
 & education low  & -0.32 & -0.41 & 0.13 & 0.18\tabularnewline
 & age 25-29 & 0.39 & 0.53 & 0.18 & 0.35\tabularnewline
 & age 30-39 & 0.91 & 1.16 & 0.17 & 0.31\tabularnewline
 &  age 40-49 & 1.19 & 1.52 & 0.21 & 0.34\tabularnewline
\hline 
Titanic & $l_{2}\left(p_{n},q\left(\beta\right)\right)$  & 0.26 & 0.15 & 0.036 & 0.020\tabularnewline
 & intercept & -0.35 & -0.61 & 0.12 & 0.15\tabularnewline
 & class 2nd & -0.95 & -1.00 & 0.16 & 0.22\tabularnewline
 & class 3rd & -1.66 & -2.82 & 0.16 & 0.37\tabularnewline
 & class crew & -0.88 & -0.69 & 0.15 & 0.22\tabularnewline
 & gender female & 2.42 & 3.33 & 0.13 & 0.32\tabularnewline
\hline 
\end{tabular}

\caption{Comparison of the regression coefficients, which are estimated by
MLE and MDE. The Euclidean distance between observed data and the
corresponding logistic regression model is denoted $l_{2}\left(p_{n},q\left(\beta\right)\right)$.
The standard deviations (SD) are estimated by bootstrap method, see
Section \ref{sec_equivalence_test} for the resampling scheme.\label{tab_reg_coef}}
\end{table}

\subsection{Test size and test results}

An appropriate value of the tolerance parameter $\varepsilon$ can
be found based on the test power under the assumption that the logistic
regression model is true. Table \ref{tab_test_size} shows the tolerance
parameter $\varepsilon$ as a function of the test power at the regression
model estimated by MDE. Let $\beta$ denote the vector of model parameters.
The test power is calculated by the Monte Carlo method, where the
samples are generated from the binomial distribution $B\left(n_{x},q_{x}\left(\beta\right)\right)$
for each $x\in C$. 

We set the parameter $\varepsilon$ so that the test power equals
0.9 at the logistic regression model, see Table \ref{tab_test_size}.
 Thus, we control the type II error rate by setting $\varepsilon$
sufficiently large. In case of the Titanic data set, all three tests
perform very similarly. $H_{0}$ cannot be rejected by any of the
considered tests because $\varepsilon_{\min}$ is larger than the
corresponding values of the parameter $\varepsilon$, see Table \ref{tab_test_size}.
Therefore, the logistic regression model of the Titanic data set is
not appropriate for the observed data. In case of the Fiji data set,
$H_{0}$ can not be rejected by any of the considered tests because
$\varepsilon_{\min}>\varepsilon$ in all cases. Therefore, the logistic
regression model of the Fiji data set is not equivalent to the observed
data.

\begin{table}[h]
\begin{tabular}{cccccccc}
\hline 
Data Set & Test & 0.9 & 0.8 & 0.7 & 0.6 & 0.5 & $\varepsilon_{\min}$\tabularnewline
\hline 
Fiji & asymptotic & 0.36 & 0.33 & 0.31 & 0.29 & 0.27 & 0.40\tabularnewline
 & asymptotic BV & 0.37 & 0.34 & 0.32 & 0.31 & 0.29 & 0.41\tabularnewline
 & bootstrap-t & 0.32 & 0.28 & 0.25 & 0.22 & 0.20 & 0.36\tabularnewline
\hline 
Titanic & asymptotic & 0.13 & 0.12 & 0.10 & 0.10 & 0.09 & 0.18\tabularnewline
 & asymptotic BV & 0.13 & 0.12 & 0.11 & 0.10 & 0.09 & 0.18\tabularnewline
 & bootstrap-t & 0.16 & 0.13 & 0.11 & 0.10 & 0.08 & 0.18\tabularnewline
\hline 
\end{tabular}

\caption{Tolerance parameter $\varepsilon$ as a function of the test power
at the model estimated by MDE. The values of the test power 0.9,...,0.5
are column names. Column $\varepsilon_{\min}$ contains the minimum
tolerance parameter $\varepsilon$, for which $H_{0}$ can be rejected
for the original data set. All tests are at the nominal level 0.05.
The number of simulations is 1000 for each experiment. \label{tab_test_size}}

\end{table}

\subsection{Type I error rates}

In this section, we study Type I error rates of the proposed tests
using Monte Carlo simulations. The boundary of $H_{0}$ is very complex
so that it is difficult to find boundary points that have the highest
rejection probability. Therefore, we consider many randomly generated
boundary points of $H_{0}$ which are based on the two original data
sets. The boundary points are randomly generated for $\varepsilon>d\left(p_{n},\mathcal{M}\right)$
using the following algorithm:
\begin{enumerate}
\item Given are the number of observations $n_{x}$ for all $x\in C$, the
counting frequencies $p_{n}$ and the tolerance parameter $\varepsilon$
so that $\varepsilon>d\left(p_{n},\mathcal{M}\right)$.
\item Draw a random sample $\hat{p}_{n,x}$ from the binomial distribution
$B\left(n_{x},p_{n,x}\right)$ for each $x\in C$. Let $\hat{p}_{n}$
denote $\left(\hat{p}_{n,x}\right)_{x\in C}$.
\item If $d\left(\hat{p}_{n},\mathcal{M}\right)<\varepsilon$ then reject
$\hat{p}_{n}$ and go back to step 2. Otherwise go to the next step.
\item Consider the linear combination $w\mapsto wp_{n}+\left(1-w\right)\hat{p}_{n}$
for $a\in\left[0,1\right]$. Find $w_{n}\in\left[0,1\right]$ so that
$d\left(w_{n}p_{n}+\left(1-w_{n}\right)\hat{p}_{n},\mathcal{M}\right)=\varepsilon$.
The value of $a_{n}$ can be found using any line search method. 
\item Return $w_{n}p_{n}+\left(1-w_{n}\right)\hat{p}_{n}$, which is a random
boundary point of $H_{0}$.
\end{enumerate}
By construction, the randomly generated boundary points are close
to the observed counting frequencies $p_{n}$ if $\varepsilon$ is
close to $d\left(p_{n},\mathcal{M}\right)$. Considering the test
power at 100 boundary points, we shed some light on the Type I error
rates in the neighborhood of the original data sets, see Table \ref{tab_test_power}
for the summary of the simulation results. The values of parameter
$\varepsilon$ in Table \ref{tab_test_power} are chosen so that $\varepsilon>d\left(p_{n},\mathcal{M}\right)$
and the test power is over 0.8 at the logistic regression model estimated
by MDE. 

The test power varies considerably from point to point. The average
test power of the asymptotic test and also of the asymptotic BV test
is below the nominal level 0.05. The average test power of the bootstrap-t
test is considerably larger than the nominal level 0.05 so that the
bootstrap-t test is not conservative. The bootstrap-t test fails to
control the type I error rate because the maximum test power is far
above the nominal level for both data sets. The asymptotic test shows
some non-conservative tendency because the maximum test power for
the Titanic data set is somewhat too large. The asymptotic BV test
performs well for both examples.

Overall, the asymptotic BV test has the best performance and should
be used. We recommend to supplement the application of the proposed
equivalence tests with a simulation of the test power at the estimated
model to get insight into the appropriate values of the tolerance
parameter $\varepsilon$. 

\begin{table}[h]
\begin{tabular}{cccccc}
\hline 
Data Set & Test & $\varepsilon$ & Mean & Max & SD\tabularnewline
\hline 
Fiji & asymptotic & 0.35 & 0.022 & 0.056 & 0.009\tabularnewline
 & asymptotic BV & 0.35 & 0.013 & 0.042 & 0.006\tabularnewline
 & bootstrap-t & 0.35 & 0.081 & 0.115 & 0.013\tabularnewline
\hline 
Titanic & asymptotic & 0.16 & 0.039 & 0.082 & 0.011\tabularnewline
 & asymptotic BV & 0.16 & 0.028 & 0.067 & 0.009\tabularnewline
 & bootstrap-t & 0.16 & 0.061 & 0.103 & 0.012\tabularnewline
\hline 
\end{tabular}

\caption{Summary of the simulated test power at 100 randomly generated boundary
points of $H_{0}$. All tests are at the nominal level 0.05. The number
of simulations is 1000 for each experiment. \label{tab_test_power}}
\end{table}

\bibliographystyle{plainnat}
\bibliography{VO_literature}

\end{document}